\pgfplotsset{compat=1.18}
\theoremstyle{plain}
\newtheorem{theorem}{Theorem}
\newtheorem{lemma}[theorem]{Lemma}
\newtheorem{proposition}[theorem]{Proposition}
\theoremstyle{definition}
\theoremstyle{remark}
\renewcommand{\mathbf}{\boldsymbol}
\renewcommand{\tilde}{\widetilde}
\renewcommand{\leq}{\leqslant}
\renewcommand{\geq}{\geqslant}
\newcommand{\set}[1]{\mathcal{#1}}
\newcommand{\fnc}[1]{\mathrm{#1}}
\newcommand{\rv}[1]{\mathsf{#1}}
\newcommand{\sys}[1]{\mathsf{#1}}
\newcommand{\syss}[1]{\mathbf{\mathsf{#1}}}
\newcommand{\defeq}{\coloneqq}
\newcommand{\defas}{\eqqcolon}
\newcommand{\hilbert}{\mathcal{H}}
\newcommand{\bra}[1]{\left\lvert#1\right\rangle}
\newcommand{\braket}[1]{\left\lvert#1\middle\rangle\!\middle\langle#1\right\rvert}
\DeclareMathOperator{\ID}{id} 
\newcommand{\tensor}{\otimes}
\DeclareMathOperator{\tr}{tr}
\DeclarePairedDelimiterX{\infdiv}[2]{(}{)}{#1\delimsize\Vert#2}
\DeclarePairedDelimiterX{\inner}[2]{\langle}{\rangle}{#1,#2}
\newcommand{\tos}[2]{\stackrel{\mathclap{\small\mbox{#1}}}{#2}} 
\NewDocumentCommand{\multiadjustlimits}{m}
 {
  \group_begin:
  \multiadjustlimits_measure:n { #1 }
  \multiadjustlimits_print:n { #1 }
  \group_end:
 }
\newcommand\ie{\textit{i.e.}}
\newcommand\eg{\textit{e.g.}}
\newcommand\wrt{w.r.t.~}
\newcommand\aka{a.k.a.~}
\newcommand\Cf{\textit{cf.}}
\begin{document}
\title{Quantum Channel Simulation in Fidelity is no more difficult than State Splitting}
\author{\IEEEauthorblockN{Michael X. Cao\IEEEauthorrefmark{1},
    Rahul Jain\IEEEauthorrefmark{1}\IEEEauthorrefmark{3}\IEEEauthorrefmark{4},
    Marco Tomamichel\IEEEauthorrefmark{1}\IEEEauthorrefmark{2}}
    \IEEEauthorblockA{\IEEEauthorrefmark{1}Centre for Quantum Technologies, National University of Singapore, Singapore}
    \IEEEauthorblockA{\IEEEauthorrefmark{2}Department of Electrical and Computer Engineering, National University of Singapore, Singapore}
    \IEEEauthorblockA{\IEEEauthorrefmark{3}Department of Computer Science, National University of Singapore, Singapore}
    \IEEEauthorblockA{\IEEEauthorrefmark{4}MajuLab, International Joint Research Unit UMI 3654}
}
\maketitle
\begin{abstract}
Characterizing the minimal communication needed for the quantum channel simulation is a fundamental task in the quantum information theory.
In this paper, we show that, in fidelity, the quantum channel simulation can be directly achieved via quantum state splitting without using a technique known as the de~Finetti reduction, and thus provide a pair of tighter one-shot bounds.
Using the bounds, we also recover the quantum reverse Shannon theorem in a much simpler way.
\end{abstract}
\section{Introduction}
We consider the problem of simulating a quantum channel using entanglement-assisted local operations and classical communications (eLOCC).
We are interested in characterizing the minimal classical communication necessary for a faithful simulation of the channel measured in fidelity.
This is a fundamental task in quantum information theory, and the first-order asymptotic rate of the minimal classical communication is characterized by the entanglement-assisted capacity of the target channel, which is known as the \emph{reverse Shannon theorem}~\cite{bennett2002entanglement, bennett2014quantum}.
Recent years have seen a number of studies of the problem in different regimes, including the one-shot no-signaling-assisted regime~\cite{fang2019quantum}, the moderate deviation regime~\cite{ramakrishnan2023moderate}, and network setups~\cite{kurri2022multiple, cheng2023quantum}.

However, despite the recent development, it remains an open task to characterize the asymptotic minimal rate of communication for quantum channel simulation in the second order.
One of the major difficulties lies within the requirement that a channel simulation protocol must work for all input states simultaneously.
This is in stark contrast with a highly related task known as the \emph{quantum state splitting} (more precisely, a special case of the task known as the quantum state transfer).
In particular, in both~\cite{ramakrishnan2023moderate} and~\cite{cheng2023quantum}, the authors approached the problem of quantum channel simulation via the quantum state splitting of some so-called \emph{de~Finetti state}, at the cost of a multiplier before the deviation term $\epsilon$ that grows polynomially \wrt the blocklength $n$ (see, \eg,~\cite[Eq.~(105)]{ramakrishnan2023moderate}).
This makes further studies of higher-order analyses very difficult along the same approach, if not impossible.

In this paper, we provide a much more direct relationship between the task of quantum state splitting and the quantum channel simulation.
In particular, we show that the fidelity between the joint input-output density operators of the target channel and that of the simulated channel (see~\eqref{eq:channel:distance}) is convex \wrt the input density operator while concave \wrt to the protocol (as a CPTP map).
Using Sion's minimax theorem, this implies that the protocol that works best for the worst input density operator has the same performance as the worst one among the protocols optimized for each input density operator (see~\eqref{eq:minimax}).
This finding not only provides a tighter one-shot achievability bound (\Cf~\cite{ramakrishnan2023moderate}), but also leads to a much simpler proof of the reverse Shannon theorem.
Moreover, this opens up new possibilities for further studies on higher-order analyses of this problem.

In the following part of the paper, we first introduce the problem of quantum channel simulation and quantum state splitting together with suitable notations.
Second, we show a direct connection between the two tasks, and thus provide a pair of tighter one-shot upper and lower bounds on the minimal message size for simulating a quantum channel with fidelity at least $1-\epsilon^2$.
Lastly, we recover the first-order asymptotic results, \aka the quantum reverse Shannon theorem, using the newly found upper and lower bounds in a much simpler way. 

\section{Quantum Channel Simulation and Quantum State Splitting}

We hereby describe the task of simulating finite-dimensional quantum channels using entanglement-assisted local operations and classical communication.
Suppose that we are given a quantum channel from system $\sys{A}$ to $\sys{B}$ described by some completely-positive-trace-preserving (CPTP) map $\mathcal{N}_{\sys{A}\to\sys{B}}:\set{D}(\hilbert_\sys{A})\to\set{D}(\hilbert_\sys{B})$ where the state spaces $\hilbert_\sys{A}$ and $\hilbert_\sys{B}$ are both finite-dimensional Hilbert spaces.
We would like to find
\begin{itemize}
    \item a pair of entangled systems $\sys{K}'$ and $\sys{K}$ (with their joint state being some pure state $\bra{\sigma}_{\sys{KK}'}$),
    \item (Alice) a joint local measurement on systems $\sys{A}$ and $\sys{K'}$ (described by some POVM $\{E_m\}_{m\in[M]}$),
    \item (Bob) a local operation from system $\sys{K}$ to $\sys{B}$ (described by some classical-controlled CPTP map $\Phi^{(m)}_{\sys{K}\to\sys{B}}$),
\end{itemize}
such that the joint effect of the latter two operations (which is effectively a CPTP map from system $\sys{A}$ to $\sys{B}$), \ie,
\begin{equation*}
   \tilde{\mathcal{N}}_{\sys{A}\to\sys{B}}: \rho_\sys{A} \mapsto \sum_{m\in[M]} \Phi_{\sys{K}\to\sys{B}}^{(m)}\left(\tr_{\sys{AK'}}\left[(E_m\tensor I_{\sys{K}})\cdot (\rho_\sys{A}\tensor \braket{\sigma}_{\sys{KK}'})\right]\right)
\end{equation*}
``resembles'' the channel $\mathcal{N}_{\sys{A}\to\sys{B}}$.
This process is depicted in Figure~\ref{fig:channel:simulation}.
\begin{figure}
\centering
\includegraphics{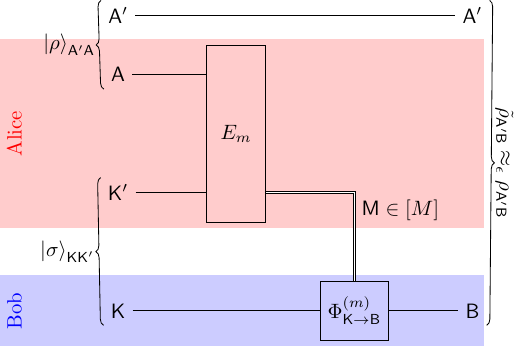}
\caption{The task of quantum channel simulation with fidelity at least $1-\epsilon^2$. The goal is to have $\tilde{\rho}_{\sys{A}'\sys{B}} \approx_{\epsilon}\rho_{\sys{A}'\sys{B}} \defeq\ID_{\sys{A}'}\tensor\mathcal{N}_{\sys{A}\to\sys{B}}(\braket{\rho}_{\sys{A}'\sys{A}})$ \emph{for all} input states $\rho_{\sys{A}'}$, where $\braket{\rho}_{\sys{A}'\sys{A}}$ is the canonical purification of $\rho_{\sys{A}'}$.}    
\label{fig:channel:simulation}
\end{figure}
More precisely, we are interested in finding the minimal alphabet size $M$ such that \emph{for all} $\rho_{\sys{A}'}\in\set{D}(\hilbert_{\sys{A}'})$
\begin{equation}\label{eq:channel:distance}
    f(\tilde{\mathcal{N}}_{\sys{A}\to\sys{B}}, \rho_{\sys{A}'}) \defeq \sqrt{ F(
    \underbrace{\ID_{\sys{A}'}\tensor\mathcal{N}_{\sys{A}\to\sys{B}}(\braket{\rho}_{\sys{A}'\sys{A}})}_{\defas \rho_{\sys{A}'\sys{B}}},
    \underbrace{\ID_{\sys{A}'}\tensor\tilde{\mathcal{N}}_{\sys{A}\to\sys{B}}(\braket{\rho}_{\sys{A}'\sys{A}}))}_{\defas \tilde{\rho}_{\sys{A}'\sys{B}}} }
    \geq \sqrt{1-\epsilon^2}
\end{equation}
for some given $\epsilon\in(0,1)$.
Here, the quantum systems $\sys{A}$ and $\sys{A}'$ have the same state space, and $\braket{\rho}_{\sys{A}'\sys{A}}\defeq(\sqrt{\rho_{\sys{A}'}}\tensor I_{\sys{A}}) \braket{\gamma}(\sqrt{\rho_{\sys{A}'}}\tensor I_\sys{A})$ is the canonical purification of $\rho_{\sys{A}'}$ on $\sys{A}$ where $\bra{\gamma}$ is the maximal entangled state on the joint system $\sys{A}'\sys{A}$.
We use the following definition for the fidelity
\begin{equation*}
F(\rho,\sigma) \defeq \left(\tr\sqrt{\rho^{1/2}\sigma\rho^{1/2}}\right)^2	.
\end{equation*}

On the other hand, quantum state splitting is a highly related task.
In particular, quantum channel simulation can be seen as a ``\emph{universal}'' version of the quantum state transfer, and the latter is a special case of quantum state splitting.
Given some composite system $\sys{SP}$ with its state described by some known fixed density operator $\rho_\sys{SP}$, the task of quantum sate splitting is to send $\sys{P}$ from Alice to Bob using (one-way) classical communication and entanglement-assisted local operations, where at the beginning of the protocol Alice has access to both $\sys{S}$ and $\sys{P}$, and at the end of the protocol Bob has access to $\sys{P}$, and the state of $\sys{RSP}$, described by $\tilde{\rho}_\sys{RSP}$, is close to $\braket{\rho}_\sys{RSP}$ in fidelity.
Here, $\sys{R}$ is some reference system that purifies $\sys{SP}$.
This is illustrated in Fig.~\ref{fig:state:splitting}.
\begin{figure}
\centering
\includegraphics{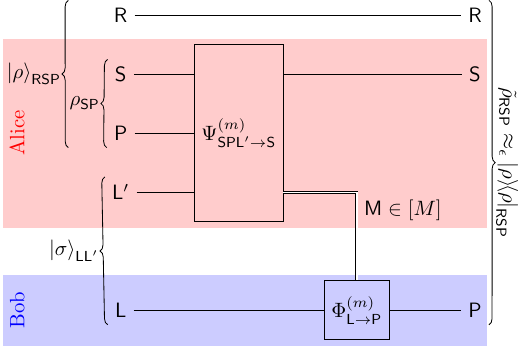}
\caption{The task of quantum state splitting with high fidelity (at least $1-\epsilon^2$).
The goal is to have $\tilde{\rho}_\sys{RSP}\approx_\epsilon \braket{\rho}_\sys{RSP}$ where $\rho_{\sys{SP}}$ is fixed and known prior to the operations, and $\sys{R}$ is some reference system purifying $\sys{SP}$.}
\label{fig:state:splitting}
\end{figure}
The major difference between the two tasks is that the protocols for the state splitting are $\rho_\sys{SP}$-specific; whereas the protocols for channel simulation have to work for all possible $\rho_{\sys{A}'}$ with no knowledge or assumptions of it.
In particular, the quantum channel simulation can be achieved by some \emph{universal} state splitting protocol, \ie, a state splitting protocol that works for all possible $\rho_\sys{SP}$ (see Fig.~\ref{fig:simulation:splitting}).
Without the ``universality'' of the state splitting protocol, assuming, for example, that we simply choose the best state splitting protocol for $\rho_\sys{EB}\defeq U_{\sys{A}\to\sys{EB}}\cdot \rho_\sys{A} \cdot U_{\sys{A}\to\sys{EB}}^\dagger$, the protocol in Fig.~\ref{fig:simulation:splitting} only gives rise to a protocol as in Fig.~\ref{fig:channel:simulation} that only works for this specific input.
\begin{figure}
\centering
\includegraphics{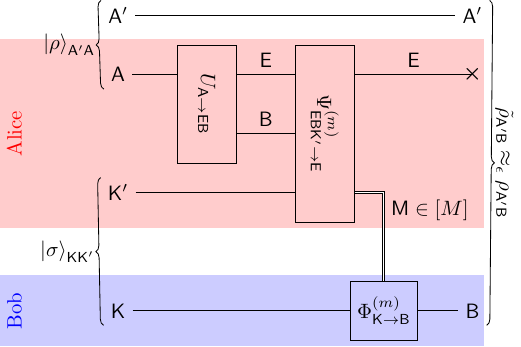}
\caption{A quantum channel simulation protocol constructed  from a state splitting protocol.
   Here, $U_{\sys{A}\to\sys{EB}}$ is the isometry representation of the original channel $\mathcal{N}_{\sys{A}\to\sys{B}}$.
   Note that we used the state splitting protocol on systems $\sys{E}$ and $\sys{B}$, and then discarded system $\sys{E}$.}    
\label{fig:simulation:splitting}
\end{figure}
For this very purpose, in the previous work~\cite{ramakrishnan2023moderate}, the state splitting protocols on the de Finetti state was considered when studying quantum channel simulations.

\section{Quantum Channel Simulation via State Splitting}

In this section, we show that the expression in~\eqref{eq:channel:distance} is concave in $\tilde{\mathcal{N}}_{\sys{A}\to\sys{B}}$ and quasi-convex in $\rho_\sys{A}$.
This allows us to apply the Sion's minimax theorem\footnote{Together with the facts that the set $\set{D}(\hilbert_{\sys{A}'})$ is convex and closed, and that the set $\mathfrak{P}^{(M)}_{\sys{A}\to\sys{B}}$ is convex.}, and write
\begin{equation}\label{eq:minimax}
	\adjustlimits
	\sup_{\tilde{\mathcal{N}}_{\sys{A}\to\sys{B}}\in\mathfrak{P}^{(M)}_{\sys{A}\to\sys{B}}}
	\inf_{\rho_{\sys{A}'}\in\set{D}(\hilbert_{\sys{A}'})}
         f(\tilde{\mathcal{N}}_{\sys{A}\to\sys{B}}, \rho_{\sys{A}'})
	= \adjustlimits
	\inf_{\rho_{\sys{A}'}\in\set{D}(\hilbert_{\sys{A}'})}
	\sup_{\tilde{\mathcal{N}}_{\sys{A}\to\sys{B}}\in\mathfrak{P}^{(M)}_{\sys{A}\to\sys{B}}}
         f(\tilde{\mathcal{N}}_{\sys{A}\to\sys{B}}, \rho_{\sys{A}'})
\end{equation}
where $\mathfrak{P}^{(M)}_{\sys{A}\to\sys{B}}$ is the set of all eLOCC protocols with alphabet size $M$ (formally defined below in~\eqref{eq:def:set:E}).
In other words, under the same communication constraint, the best protocol for channel simulation has the same performance as the worst-performing protocol among the best protocols for each $\rho_{\sys{A}'}$.
This allows us to use the protocols derived from the state-splitting protocols (as in Fig.~\ref{fig:simulation:splitting}) and its achievability bounds (see~\cite[Theorem~3]{ramakrishnan2023moderate} and~\cite[Theorem~1]{anshu2017quantum}) to provide a one-shot achievability bound for the channel simulation.
It is worth-noting that there are achievability bounds in network communication tasks that utilize the Sion's minimax theorem in similar ways (\eg, see~\cite{anshu2020noisy} and~\cite{anshu2018hypothesis}).
This bound matches with the converse bound (with small fudge terms) one can derive using the non-lockability property and the data-processing inequality of max-mutual information (\eg, see~\cite[Proposition~32]{ramakrishnan2023moderate}).

We formalize the set of all eLOCC protocols as described at the beginning of this paper.
Given quantum systems $\sys{A}$ and $\sys{B}$, we denote $\mathfrak{C}_{\sys{A}\to\sys{B}}$ the set of CPTP maps from $\sys{A}$ to $\sys{B}$, and we define the set of \emph{entanglement-assisted local-operation classical-communication (eLOCC) protocols from $\sys{A}$ to $\sys{B}$ with alphabet size $M\in\mathbb{N}$} as a subset of $\mathfrak{C}_{\sys{A}\to\sys{B}}$ as
\begin{equation}\label{eq:def:set:E}
\mathfrak{P}_{\sys{A}\to\sys{B}}^{(M)} \defeq \left\{
\begin{aligned}
&\tilde{\mathcal{N}}_{\sys{A}\to\sys{B}}: \set{D}(\hilbert_\sys{A}) \to \set{D}(\hilbert_\sys{B})\\
&\rho_\sys{A} \mapsto \sum_{m\in[M]} \Phi_{\sys{K}\to\sys{B}}^{(m)}\left(\tr_{\sys{AK'}}\left[(E_m\tensor I_{\sys{K}})\cdot (\rho_\sys{A}\tensor \braket{\sigma}_{\sys{KK}'})\right]\right)
\end{aligned}
\middle\vert\ 
\begin{aligned}
    & \sys{K}\text{, }\sys{K}'\text{ are quantum systems with }\hilbert_\sys{K}=\hilbert_{\sys{K}}'\\
    & \{E_m\}_{m\in[M]}\text{ is some POVM on the joint system } \sys{AK}'\\
    & \Phi^{(m)}_{\sys{K}\to\sys{B}} \text{ is some CPTP from }\sys{K} \text{ to }\sys{B} \text{ for each }m
\end{aligned}
\right\}.
\end{equation}
Notice that $\mathfrak{P}_{\sys{A}\to\sys{B}}^{(M)}$ is a convex (but not closed) subset of $\mathfrak{C}_{\sys{A}\to\sys{B}}$.
To see $\mathfrak{P}_{\sys{A}\to\sys{B}}^{(M)}$ to be convex, we observe that any convex combination of two eLOCC protocols can be achieved using a single bit of shared randomness, \ie, Alice and Bob can choose to use protocol \#1 if the bit turns out to be `0', or protocol \#2 if the bit is `1'.
The shared randomness can be extracted from a pair of entangled qubits; and the latter can be provided by enlarging the dimensions of the systems $\sys{K}$ and $\sys{K}'$.

For a given quantum channel $\mathcal{N}_{\sys{A}\to\sys{B}}$ from system $\sys{A}$ to system $\sys{B}$, the best performance (in terms of fidelity) of all $M$-alphabet-size eLOCC protocols for simulating $\mathcal{N}_{\sys{A}\to\sys{B}}$ can be expressed as 
\begin{equation}\label{eq:def:f}
    1-(\epsilon^\star_M)^2 = 
    \left(\adjustlimits
    \sup_{\tilde{\mathcal{N}}_{\sys{A}\to\sys{B}}\in\mathfrak{P}^{(M)}_{\sys{A}\to\sys{B}}}
	\inf_{\rho_{\sys{A}'}\in\set{D}(\hilbert_{\sys{A}'})}
         f(\tilde{\mathcal{N}}_{\sys{A}\to\sys{B}}, \rho_{\sys{A}'}) \right)^2.
\end{equation}
Recall that $\hilbert_{\sys{A}'}=\hilbert_\sys{A}$, and $\braket{\rho}_{\sys{A}'\sys{A}}\defeq (\sqrt{\rho_{\sys{A}'}}\tensor I_\sys{A})\braket{\gamma}(\sqrt{\rho_{\sys{A}'}}\tensor I_\sys{A})$ is the canonical purification of $\rho_{\sys{A}'}$ on $\sys{A}$ where $\bra{\gamma}$ is the maximal entangled state on the joint system $\sys{AA}'$.
We consider $f$ in~\eqref{eq:channel:distance} as a function defined on $\mathfrak{P}^{(M)}_{\sys{A}\to\sys{B}} \times \set{D}(\hilbert_{\sys{A}'})$.
Since the fidelity is a jointly concave function, the function $f$ is also concave in its first argument $\tilde{\mathcal{N}}_{\sys{A}\to\sys{B}}\in\mathfrak{P}^{(M)}_{\sys{A}\to\sys{B}}$ for each fixed $\rho_{\sys{A}'}$.
In the following, we show that the function $f$ is convex \wrt to its second argument $\rho_{\sys{A}'}\in\set{D}(\hilbert_{\sys{A}'})$.
\begin{lemma}\label{lemma:f:convex}
    The function $f$ defined in~\eqref{eq:channel:distance} is convex in $\rho_{\sys{A}'}\in\set{D}(\hilbert_{\sys{A}'})$ for each fixed $\tilde{\mathcal{N}}_{\sys{A}\to\sys{B}}\in\mathfrak{P}^{(M)}_{\sys{A}\to\sys{B}}$.
\end{lemma}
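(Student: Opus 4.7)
The plan is to apply a Stinespring dilation together with Uhlmann's theorem to rewrite $f(\tilde{\mathcal{N}}_{\sys{A}\to\sys{B}}, \rho_\sys{A})$ as a pointwise supremum of absolute values of linear functionals of $\rho_\sys{A}$, from which convexity follows immediately. The non-linearity of the map $\rho_\sys{A}\mapsto\braket{\rho}_{\sys{A}'\sys{A}}$ (due to the $\sqrt{\rho_\sys{A}}$ factors) means that the joint concavity of fidelity cannot be used directly, so the key step is to exploit Uhlmann's theorem to convert the complicated expression into something linear in $\rho_\sys{A}$.

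First I would fix Stinespring isometries $U,\tilde{U}:\hilbert_\sys{A}\to\hilbert_\sys{B}\tensor\hilbert_\sys{F}$ for $\mathcal{N}_{\sys{A}\to\sys{B}}$ and $\tilde{\mathcal{N}}_{\sys{A}\to\sys{B}}$, respectively, with a common auxiliary space $\hilbert_\sys{F}$ of sufficiently large dimension. Setting $|\psi_\rho\rangle_{\sys{A}'\sys{A}}\defeq(I_{\sys{A}'}\tensor\sqrt{\rho_\sys{A}})|\gamma\rangle_{\sys{A}'\sys{A}}$, the vectors $(I_{\sys{A}'}\tensor U)|\psi_\rho\rangle$ and $(I_{\sys{A}'}\tensor\tilde{U})|\psi_\rho\rangle$ are purifications on $\hilbert_{\sys{A}'\sys{B}\sys{F}}$ of $\ID_{\sys{A}'}\tensor\mathcal{N}_{\sys{A}\to\sys{B}}(\braket{\rho}_{\sys{A}'\sys{A}})$ and $\ID_{\sys{A}'}\tensor\tilde{\mathcal{N}}_{\sys{A}\to\sys{B}}(\braket{\rho}_{\sys{A}'\sys{A}})$, respectively. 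Uhlmann's theorem then yields
\begin{equation*}
f(\tilde{\mathcal{N}}_{\sys{A}\to\sys{B}}, \rho_\sys{A}) = \sup_V \bigabs{ \langle\psi_\rho|\bigl(I_{\sys{A}'}\tensor U^\dagger(I_\sys{B}\tensor V)\tilde{U}\bigr)|\psi_\rho\rangle },
\end{equation*}
where $V$ ranges over unitaries on $\hilbert_\sys{F}$.

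Next, I would denote $T_V\defeq U^\dagger(I_\sys{B}\tensor V)\tilde{U}$, an operator on $\hilbert_\sys{A}$ that depends on $V$, $U$, and $\tilde{U}$ but not on $\rho_\sys{A}$. Applying the standard identity $\langle\gamma|(I_{\sys{A}'}\tensor M)|\gamma\rangle=\tr(M)$ for any operator $M$ on $\hilbert_\sys{A}$, along with the cyclicity of trace, I obtain
\begin{equation*}
\langle\psi_\rho|(I_{\sys{A}'}\tensor T_V)|\psi_\rho\rangle = \tr\bigl(\sqrt{\rho_\sys{A}}\,T_V\,\sqrt{\rho_\sys{A}}\bigr) = \tr(\rho_\sys{A}\,T_V),
\end{equation*}
and therefore
\begin{equation*}
f(\tilde{\mathcal{N}}_{\sys{A}\to\sys{B}}, \rho_\sys{A}) = \sup_V\bigabs{\tr(\rho_\sys{A}\,T_V)}.
\end{equation*}

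Finally, for each fixed $V$, the map $\rho_\sys{A}\mapsto\tr(\rho_\sys{A}T_V)$ is linear, so $\rho_\sys{A}\mapsto\bigabs{\tr(\rho_\sys{A}T_V)}$ is convex, and the pointwise supremum of a family of convex functions is convex. The only mildly subtle point is the setup of Uhlmann's theorem with purifications whose natural environments may a priori have different dimensions; this is handled by embedding both into the common space $\hilbert_\sys{F}$ and optimizing over unitaries on $\hilbert_\sys{F}$. The rest reduces to a direct algebraic manipulation, so I do not expect any serious obstacle.
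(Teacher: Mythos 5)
Your proof is correct, and it is genuinely different from the proof the paper gives in the body of the text. There, the argument builds a coherent superposition purification $\bra{\tilde{\rho}_\lambda}=\sqrt{1-\lambda}\bra{0}\bra{\rho_0}+\sqrt{\lambda}\bra{1}\bra{\rho_1}$ of $\rho_\lambda$, uses the fact that the fidelity of the channel outputs does not depend on the choice of purification, and then measures the auxiliary qubit $\sys{R}$; the data-processing inequality and the block-diagonal structure of the post-measurement states make the root fidelity split into the desired convex combination. Your argument is instead in the spirit of the paper's \emph{alternative} proof in Appendix~A: both realize $f(\tilde{\mathcal{N}}_{\sys{A}\to\sys{B}},\cdot)$ as a pointwise supremum of functions affine in $\rho_\sys{A}$, whence convexity is immediate. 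The appendix reaches this form through the SDP characterization of fidelity on the Choi matrices and a change of variables, whereas you reach it through Stinespring dilations and Uhlmann's theorem, optimizing over unitaries on a common environment $\sys{F}$ and then using $\langle\gamma|(I\otimes M)|\gamma\rangle=\tr M$ to extract the linear functional $\rho_\sys{A}\mapsto\tr(\rho_\sys{A}T_V)$. Your route is arguably a bit more elementary in that it avoids the SDP machinery entirely. Two minor points worth keeping explicit, both of which you flag or implicitly use: the environment $\sys{F}$ must be chosen large enough to host Stinespring dilations of both $\mathcal{N}_{\sys{A}\to\sys{B}}$ and $\tilde{\mathcal{N}}_{\sys{A}\to\sys{B}}$ so that Uhlmann's theorem reduces to a maximum over unitaries on $\sys{F}$ alone (this is unproblematic here since both channels act between fixed finite-dimensional spaces), and the paper's maximally entangled vector $\bra{\gamma}$ is unnormalized, which is exactly what makes the trace identity hold without extra dimension factors.
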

This can be shown as a direct result of~\cite[Proposition~4.80]{khatri2020principles}.
However, for completeness, we provide a short proof as follows.
(An alternative proof is also included in Appendix~\ref{app:proof:lemma:f:convex}.)
\begin{proof}
Let $\tilde{\mathcal{N}}_{\sys{A}\to\sys{B}} \in\mathfrak{P}^{(M)}_{\sys{A}\to\sys{B}}$ be fixed, and let $\rho_0\neq\rho_1\in\set{D}(\hilbert_{\sys{A}'})$ be picked arbitrarily.
For each $\lambda\in[0,1]$, denote $\rho_\lambda \defeq (1-\lambda)\cdot\rho_0 + \lambda\cdot \rho_1$.
Also denote $\bra{\rho_\lambda}\defeq (\sqrt{\rho_\lambda}\tensor I)\braket{\gamma}(\sqrt{\rho_\lambda}\tensor I)$ the canonical purification of $\rho_\lambda$.
Note that $\bra{\tilde{\rho}_\lambda}\defeq \sqrt{1-\lambda}\bra{0}\bra{\rho_0}+\sqrt{\lambda}\bra{1}\bra{\rho_1}$ is also a purification of $\rho_\lambda$.
Therefore, denoting $\sys{R}$ a single-qubit auxiliary system, we have
\begin{align}
f(\tilde{\mathcal{N}}_{\sys{A}\to\sys{B}}, \rho_\lambda)
& = F^{1/2}(\ID_{\sys{A}'}\tensor\mathcal{N}_{\sys{A}\to\sys{B}}(\braket{\rho_\lambda}), \ID_{\sys{A}'}\tensor\tilde{\mathcal{N}}_{\sys{A}\to\sys{B}}(\braket{\rho_\lambda})) \nonumber \\
& = F^{1/2}(\ID_\sys{R}\tensor\ID_{\sys{A}'}\tensor\mathcal{N}_{\sys{A}\to\sys{B}}(\braket{\tilde{\rho}_\lambda}), \ID_\sys{R}\tensor\ID_{\sys{A}'}\tensor\tilde{\mathcal{N}}_{\sys{A}\to\sys{B}}(\braket{\tilde{\rho}_\lambda})) \label{lemma:f:convex:1} \\
& \leq \begin{aligned}[t]
F^{1/2}(\ID_\sys{R}\tensor\ID_{\sys{A}'}\tensor\mathcal{N}_{\sys{A}\to\sys{B}}((1-\lambda)\braket{0}\tensor\braket{\rho_0}+\lambda\braket{1}\tensor\braket{\rho_1}), \ldots \\
\ID_\sys{R}\tensor\ID_{\sys{A}'}\tensor\tilde{\mathcal{N}}_{\sys{A}\to\sys{B}}((1-\lambda)\braket{0}\tensor\braket{\rho_0}+\lambda\braket{1}\tensor\braket{\rho_1}))
\end{aligned} \label{lemma:f:convex:2}\\
& = \begin{aligned}[t]
(1-\lambda)\cdot F^{1/2}(\ID_{\sys{A}'}\tensor\mathcal{N}_{\sys{A}\to\sys{B}}(\braket{\rho_0}), \ID_{\sys{A}'}\tensor\tilde{\mathcal{N}}_{\sys{A}\to\sys{B}}(\braket{\rho_0})) + \ldots \\
\lambda\cdot F^{1/2}(\ID_{\sys{A}'}\tensor\mathcal{N}_{\sys{A}\to\sys{B}}(\braket{\rho_1}), \ID_{\sys{A}'}\tensor\tilde{\mathcal{N}}_{\sys{A}\to\sys{B}}(\braket{\rho_1})) 
\end{aligned}\nonumber \\
&= (1-\lambda)\cdot f(\tilde{\mathcal{N}}_{\sys{A}\to\sys{B}}, \rho_0) + \lambda\cdot f(\tilde{\mathcal{N}}_{\sys{A}\to\sys{B}}, \rho_1), \nonumber
\end{align} 
where we use the Uhlmann's theorem in~\eqref{lemma:f:convex:1}, and measured the system $\sys{R}$ in~\eqref{lemma:f:convex:2}.
\end{proof}

Lemma~\ref{lemma:f:convex} provides a direct connection between the task of channel simulation and the state splitting.
In particular, since the set $\set{D}(\hilbert_{\sys{A}'})$ is closed and convex, and the set $\mathfrak{P}^{(M)}_{\sys{A}\to\sys{B}}$ is convex, we can apply the Sion's minimax theorem, \ie,
\begin{theorem}[Sion's minimax theorem~{\cite{sion1958general}}]
Let $\set{X}$ be a compact convex set and $\set{Y}$ be a convex set.
If a function $f:\set{X}\times\set{Y}\to\mathbb{R}$ satiesfies
\begin{itemize}
    \item $f(x,\cdot)$ is upper semi-continuous and quasi-concave on $\set{Y}$ for each fixed $x\in\set{X}$,
    \item $f(\cdot,y)$ is lower semi-continuous and quasi-convex on $\set{X}$ for each fixed $y\in\set{Y}$,
\end{itemize}
then,
\begin{equation*}
\adjustlimits \min_{x\in\set{X}} \sup_{y\in\set{Y}} f(x,y) = \adjustlimits \sup_{y\in\set{Y}} \min_{x\in\set{X}} f(x,y).
\end{equation*}
\end{theorem}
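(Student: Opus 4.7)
The plan is to prove the nontrivial direction
\[
\inf_{x\in\set{X}} \sup_{y\in\set{Y}} f(x,y) \leq \sup_{y\in\set{Y}} \inf_{x\in\set{X}} f(x,y),
\]
since the reverse inequality holds for any function $f$. I would argue by contradiction: suppose some $c \in \mathbb{R}$ satisfies
\[
\sup_{y\in\set{Y}} \inf_{x\in\set{X}} f(x,y) < c < \inf_{x\in\set{X}} \sup_{y\in\set{Y}} f(x,y),
\]
and work towards a contradiction by producing a single $x^\star$ that simultaneously bounds $f(x^\star, y)$ by $c$ for every $y \in \set{Y}$.

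The natural object is the sublevel set $A_y \defeq \{x\in\set{X} : f(x,y)\leq c\}$ for each $y\in\set{Y}$. Lower semi-continuity of $f(\cdot,y)$ makes $A_y$ closed, quasi-convexity makes it convex, and the strict inequality $c > \sup_y\inf_x f(x,y) \geq \inf_x f(x,y)$ combined with compactness of $\set{X}$ (which ensures the infimum is attained) makes $A_y$ nonempty. If the family $\{A_y\}_{y\in\set{Y}}$ possessed the finite intersection property, compactness of $\set{X}$ would yield some $x^\star \in \bigcap_{y\in\set{Y}} A_y$ with $\sup_y f(x^\star,y) \leq c$, contradicting the right-hand inequality above. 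The whole theorem thus reduces to establishing the finite intersection property of $\{A_y\}_{y\in\set{Y}}$.

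The crux is therefore showing that, for any finite $y_1,\ldots,y_n \in \set{Y}$, $A_{y_1}\cap\cdots\cap A_{y_n} \neq \emptyset$. I would proceed by induction on $n$, the base case $n=1$ being immediate. The heart of the argument is the two-point case: given $y_1,y_2\in\set{Y}$, parametrise the segment $y_\lambda \defeq (1-\lambda)y_1 + \lambda y_2$ and study how the nonempty closed convex sublevel sets $A_{y_\lambda}\subset\set{X}$ vary with $\lambda\in[0,1]$; the $y$-side hypotheses, namely quasi-concavity and upper semi-continuity of $f(x,\cdot)$ on $\set{Y}$, then feed into a KKM-type intersection lemma (equivalently, a connectedness or Brouwer fixed-point argument) forcing $A_{y_1}\cap A_{y_2}\neq\emptyset$. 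The inductive step from $n$ to $n+1$ is then an analogous two-set gluing, applying the inductive hypothesis to the convex closed set $A_{y_1}\cap\cdots\cap A_{y_n}$ and to $A_{y_{n+1}}$. The main obstacle is exactly this KKM/topological step: closedness and convexity of the sublevel sets in $\set{X}$ alone do not preclude them from being disjoint, and one must import a nontrivial topological intersection theorem (whose own proof rests on Brouwer or an equivalent) to exclude that. Once that step is in hand, the induction and the compactness argument of the previous paragraph assemble the full theorem.
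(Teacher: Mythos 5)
The paper does not prove this statement---it is cited to~\cite{sion1958general} and used as a black box---so there is no ``paper proof'' to compare against; I assess your sketch on its own terms.

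Your reduction is correct and standard: the inequality $\sup_y\inf_x f\leq\inf_x\sup_y f$ is trivial, and the converse follows once the closed, convex, nonempty sublevel sets $A_y=\{x\in\set{X}:f(x,y)\leq c\}$ are shown to have the finite intersection property, since compactness of $\set{X}$ then yields an $x^\star$ with $\sup_y f(x^\star,y)\leq c$, contradicting $c<\inf_x\sup_y f$. But the sketch stops precisely where the content lies. You label the two-point case $A_{y_1}\cap A_{y_2}\neq\emptyset$ as the ``main obstacle'' and assert---without argument---that it requires a topological intersection theorem ``whose own proof rests on Brouwer or an equivalent.'' That is both a genuine gap and an overstatement of the difficulty. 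Komiya's elementary proof of Sion's theorem closes the two-point case using nothing beyond the connectedness of the segment $[y_1,y_2]$: pick $c<c'<\min_x\max\{f(x,y_1),f(x,y_2)\}$, set $A=\{x:f(x,y_1)\leq c'\}$ and $B=\{x:f(x,y_2)\leq c'\}$ (disjoint, closed, convex, nonempty), and for each $z$ on the segment let $C_z=\{x:f(x,z)\leq c'\}$. Quasi-concavity of $f(x,\cdot)$ forces $C_z\subset A\cup B$; convexity makes $C_z$ connected, so each $C_z$ lies wholly in $A$ or wholly in $B$; upper semi-continuity of $f(x,\cdot)$ then makes the index sets $\{z:C_z\subset A\}$ and $\{z:C_z\subset B\}$ both closed, partitioning the connected segment into two nonempty closed pieces---a contradiction. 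No fixed-point machinery is involved. Your inductive step is also not quite the ``analogous two-set gluing'' you describe: $A_{y_1}\cap\cdots\cap A_{y_n}$ is not a sublevel set of $f(\cdot,y)$ for any single $y$, so the two-point lemma does not apply to it directly; the correct move is to replace $\set{X}$ by the compact convex set $A_{y_{n+1}}$ (at a suitable intermediate level) and rerun the $n$-point argument inside it. In short, the architecture is right, but the load-bearing lemma is left unproved and mischaracterized as needing heavier machinery than it actually does.
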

Using the above theorem, we rewrite~\eqref{eq:def:f} as
\begin{equation}\label{eq:minimax:1}
    1-(\epsilon^\star_M)^2 = 
 \left(\adjustlimits
     \inf_{\rho_{\sys{A}'}\in\set{D}(\hilbert_{\sys{A}'})}
    \sup_{\tilde{\mathcal{N}}_{\sys{A}\to\sys{B}}\in\mathfrak{P}^{(M)}_{\sys{A}\to\sys{B}}}
         f(\tilde{\mathcal{N}}_{\sys{A}\to\sys{B}}, \rho_{\sys{A}'}) \right)^2.
\end{equation}
In other words, the optimal performance of channel simulations is directly determined by the optimal performance of quantum state transfers using eLOCC protocols under the same classical communication constraint.
The latter can be achieved using quantum state-splitting protocols (see Fig.~\ref{fig:simulation:splitting}) provided that the message size $M$ is large enough~\cite{ramakrishnan2023moderate}.
\begin{proposition}\label{prop:one-shot:achivability}
Given a quantum channel $\mathcal{N}_{\sys{A}\to\sys{B}}$, there exists an eLOCC protocol with alphabet size $M$ that simulates $\mathcal{N}_{\sys{A}\to\sys{B}}$ with fidelity at least $1-\epsilon^2$ if
\begin{equation}\label{eq:one-shot:achivability}
\log{M} \geq \adjustlimits \sup_{\rho_{\sys{A}'}\in\set{D}(\hilbert_{\sys{A}'})} \inf_{\sigma_\sys{B}\in\set{D}(\hilbert_\sys{B})} D_{\max}^{\epsilon-\delta, \sys{A}'}\infdiv*{\rho_{\sys{A}'\sys{B}}}{\rho_{\sys{A}'}\tensor\sigma_\sys{B}} -\log{\delta^2},
\end{equation}
for some $\delta\in(0,\epsilon)$, where $\rho_{\sys{A}'\sys{B}} \defeq \ID_{\sys{A}'} \tensor \mathcal{N}_{\sys{A}\to\sys{B}}(\braket{\rho}_{\sys{A}'\sys{A}})$.
Here, for density operators $\varrho_{\sys{AB}}, \varsigma_{\sys{AB}}\in\set{D}(\hilbert_\sys{AB})$, and $\varepsilon\in(0,1)$, the partial smoothed max-divergence $D_{\max}^{\varepsilon,\sys{A}}\infdiv*{\varrho_{\sys{AB}}}{\varsigma_\sys{AB}}$ is defined as
\begin{equation*}
D_{\max}^{\varepsilon,\sys{A}}\infdiv*{\varrho_{\sys{AB}}}{\varsigma_\sys{AB}} \defeq \inf_{\tilde{\varrho}_{\sys{AB}}\in\set{D}(\hilbert_\sys{AB}): \tilde{\varrho}_\sys{A}=\varrho_\sys{A}, F(\tilde{\varrho}_{\sys{AB}},\varrho_{\sys{AB}})\geq1-\varepsilon^2} D_{\max}\infdiv*{\tilde{\varrho}_{\sys{AB}}}{\varsigma_{\sys{AB}}}.
\end{equation*}
\end{proposition}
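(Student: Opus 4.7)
The plan is to combine the minimax reduction in~\eqref{eq:minimax:1} with an existing one-shot achievability bound for quantum state splitting applied through the construction in Fig.~\ref{fig:simulation:splitting}. By~\eqref{eq:minimax:1}, it suffices to exhibit, for each fixed $\rho_\sys{A}\in\set{D}(\hilbert_\sys{A})$, an eLOCC protocol $\tilde{\mathcal{N}}^{(\rho_\sys{A})}_{\sys{A}\to\sys{B}}\in\mathfrak{P}_{\sys{A}\to\sys{B}}^{(M)}$ (possibly depending on $\rho_\sys{A}$) whose action on that specific input is $\epsilon$-close to $\rho_{\sys{A}'\sys{B}}$ in purified distance; the minimax identity then promotes this family of input-specific protocols into the existence of a single universal protocol in $\mathfrak{P}_{\sys{A}\to\sys{B}}^{(M)}$ achieving the same uniform error.

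To construct such an input-specific protocol, I would fix $\rho_\sys{A}$ and let $U_{\sys{A}\sys{E}\to\sys{E}'\sys{B}}$ be a Stinespring dilation of $\mathcal{N}_{\sys{A}\to\sys{B}}$. Applying $U$ to $\rho_\sys{A}\tensor\braket{0}_\sys{E}$ produces a pure state on $\sys{E}'\sys{B}$ whose reduction on $\sys{E}'\sys{B}$ equals that of $\rho_{\sys{A}'\sys{B}}$ under a suitable isometry, with canonical purification on $\sys{A}'$. As illustrated in Fig.~\ref{fig:simulation:splitting}, simulating $\mathcal{N}_{\sys{A}\to\sys{B}}$ on the specific input $\rho_\sys{A}$ then reduces to splitting the $\sys{B}$-register of this state from Alice to Bob (while discarding $\sys{E}'$ at Alice's end). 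Any state-splitting protocol with classical message alphabet $[M]$ and shared entanglement is, by construction, of the form~\eqref{eq:def:set:E}, so it lies in $\mathfrak{P}_{\sys{A}\to\sys{B}}^{(M)}$.

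Next, I would invoke the one-shot achievability bound for state splitting~\cite[Theorem~1]{anshu2017quantum} (or its partial-smoothing refinement in~\cite[Theorem~3]{ramakrishnan2023moderate}): for any fixed bipartite target $\rho_{\sys{A}'\sys{B}}$, there exists an eLOCC state-splitting protocol with alphabet size $M$ producing an output within purified distance $\epsilon$ of $\rho_{\sys{A}'\sys{B}}$ whenever
\begin{equation*}
\log M \geq \inf_{\sigma_\sys{B}\in\set{D}(\hilbert_\sys{B})} D_{\max}^{\epsilon-\delta,\sys{A}'}\infdiv*{\rho_{\sys{A}'\sys{B}}}{\rho_\sys{A}\tensor\sigma_\sys{B}} - \log\delta^2,
\end{equation*}
where the partial smoothing on $\sys{A}'$ guarantees that the marginal held by Alice is preserved, so that the simulated state on $\sys{A}'\sys{B}$ remains jointly close to $\rho_{\sys{A}'\sys{B}}$. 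Since the right-hand side of~\eqref{eq:one-shot:achivability} is precisely the supremum of this quantity over $\rho_\sys{A}$, the stated lower bound on $\log M$ is uniformly sufficient for every input state, so every factor $\sup_{\tilde{\mathcal{N}}} F(\cdots)$ inside the outer infimum in~\eqref{eq:minimax:1} is at least $1-\epsilon^2$.

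The main obstacle is bookkeeping rather than conceptual: one must verify that the state-splitting protocol produced by~\cite{anshu2017quantum} fits the eLOCC form~\eqref{eq:def:set:E} (which it does, since one-way classical communication with shared entanglement is a strict subclass), and that the partial smoothing on $\sys{A}'$ translates correctly through the Stinespring dilation so that tracing out $\sys{E}'$ preserves the $(\epsilon-\delta)+\delta=\epsilon$ fidelity guarantee via the triangle inequality for purified distance and the monotonicity of fidelity under partial trace. Once this is checked, applying~\eqref{eq:minimax:1} to the family of input-specific protocols immediately converts the input-dependent guarantees into a single universal $\tilde{\mathcal{N}}_{\sys{A}\to\sys{B}}\in\mathfrak{P}_{\sys{A}\to\sys{B}}^{(M)}$ whose worst-case purified distance to $\mathcal{N}_{\sys{A}\to\sys{B}}$ is at most $\epsilon$, completing the proof.
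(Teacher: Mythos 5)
Your proposal is correct and follows essentially the same route as the paper: for each fixed $\rho_\sys{A}$ construct an input-specific eLOCC protocol from a state-splitting protocol via the Stinespring dilation (Fig.~\ref{fig:simulation:splitting}), invoke the one-shot state-splitting achievability bound of~\cite[Theorem~3]{ramakrishnan2023moderate}, and then use the minimax identity~\eqref{eq:minimax:1} to upgrade the family of input-specific protocols to a single universal one. One small caveat: your ``$(\epsilon-\delta)+\delta=\epsilon$ via triangle inequality'' remark overstates what needs checking --- both the $\epsilon-\delta$ smoothing parameter and the $-\log\delta^2$ term are internal to the state-splitting theorem, which already outputs $\epsilon$-tolerance directly, and discarding $\sys{E}'$ costs nothing by monotonicity of fidelity under partial trace, so no additional triangle-inequality decomposition is needed.
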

\begin{proof}
This is a direct consequence of~\eqref{eq:minimax:1} and the results on the quantum state splitting (see~\cite[Theorem~3]{ramakrishnan2023moderate} and~\cite[Theorem~1]{anshu2017quantum}), \ie, given a pure state $\bra{\rho}_{\sys{A}'\sys{EB}}$, there exist a quantum state splitting protocol on systems $\sys{E}$ and $\sys{B}$ that achieves the $(1-\epsilon^2)$-fidelity if
\begin{equation}\label{eq:state:split:achievability}
    \log{M} \geq \inf_{\sigma_\sys{B}\in\set{D}(\hilbert_\sys{B})} D_{\max}^{\epsilon-\delta,\sys{A}'}\infdiv*{\rho_{\sys{A}'\sys{B}}}{\rho_{\sys{A}'}\tensor\sigma_\sys{B}} - \log{\delta^2}.
\end{equation}
In other words, for an integer $M$ large enough such that~\eqref{eq:one-shot:achivability} holds for \emph{all} $\rho_{\sys{A}'}\in\set{D}(\hilbert_{\sys{A}'})$, using the quantum state splitting protocol guaranteed to exists above, one can construct an eLOCC protocol $\tilde{\mathcal{N}}_{\sys{A}\to\sys{B}}$ \emph{for each $\rho_{\sys{A}'}$} such that (see~Fig.~\ref{fig:simulation:splitting}) 
\begin{equation*}
f(\tilde{\mathcal{N}}_{\sys{A}\to\sys{B}}, \rho_{\sys{A}'}) \geq \sqrt{1-\epsilon^2}.
\end{equation*}
Referring to~\eqref{eq:minimax:1}, the maximum fidelity that can be achieved by $M$-alphabet eLOCC protocols is at least
\begin{equation*}
1-(\epsilon_M^\star)^2 \geq \inf_{\rho_{\sys{A}'}\in\set{D}(\hilbert_{\sys{A}'})} 1-\epsilon^2 = 1-\epsilon^2.
\end{equation*}
Thus, there must exists at least one such protocol that simulates $\mathcal{N}_{\sys{A}\to\sys{B}}$ with fidelity at least $1-\epsilon^2$. 
\end{proof}

Similar to \cite[Proposition~32]{ramakrishnan2023moderate} and \cite[Theorem~2]{anshu2017quantum}, we have the following one-shot converse bound.
\begin{proposition}\label{prop:one-shot:converse}
Given a quantum channel $\mathcal{N}_{\sys{A}\to\sys{B}}$, for any $M$-alphabet-size eLOCC protocols that simulates $\mathcal{N}_{\sys{A}\to\sys{B}}$ with fidelity at least $1-\epsilon^2$, it holds that
\begin{equation}\label{eq:one-shot:converse}
\log{M} \geq \adjustlimits \sup_{\rho_{\sys{A}'}\in\set{D}(\hilbert_{\sys{A}'})} \inf_{\sigma_\sys{B}\in\set{D}(\hilbert_\sys{B})} D_{\max}^{\epsilon,\sys{A}'}\infdiv*{\rho_{\sys{A}'\sys{B}}}{\rho_{\sys{A}'}\tensor\sigma_\sys{B}}.
\end{equation}
Recall that $\rho_{\sys{A}'\sys{B}} \defeq \ID_{\sys{A}'} \tensor \mathcal{N}_{\sys{A}\to\sys{B}}(\braket{\rho}_{\sys{A}'\sys{A}})$.
\end{proposition}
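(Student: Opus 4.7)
The plan follows the template of~\cite[Proposition~32]{ramakrishnan2023moderate} and~\cite[Theorem~2]{anshu2017quantum}: for each fixed input $\rho_\sys{A}$, show that the simulated output has $D_{\max}$-divergence at most $\log M$ from a product state whose $\sys{A}'$-marginal is $\rho_{\sys{A}'}$, and then absorb this into the smoothing via the $\epsilon$-closeness between $\tilde{\rho}_{\sys{A}'\sys{B}}$ and $\rho_{\sys{A}'\sys{B}}$.

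Concretely, pick any $\tilde{\mathcal{N}}_{\sys{A}\to\sys{B}}\in\mathfrak{P}_{\sys{A}\to\sys{B}}^{(M)}$ that $\epsilon$-simulates $\mathcal{N}_{\sys{A}\to\sys{B}}$, and set $\tilde{\rho}_{\sys{A}'\sys{B}}\defeq\ID_{\sys{A}'}\tensor\tilde{\mathcal{N}}_{\sys{A}\to\sys{B}}(\braket{\rho}_{\sys{A}'\sys{A}})$. Because $\tilde{\mathcal{N}}_{\sys{A}\to\sys{B}}$ is trace-preserving and leaves $\sys{A}'$ untouched, we have $\tilde{\rho}_{\sys{A}'}=\rho_{\sys{A}'}$, and by assumption $F(\rho_{\sys{A}'\sys{B}},\tilde{\rho}_{\sys{A}'\sys{B}})\geq 1-\epsilon^2$, so $\tilde{\rho}_{\sys{A}'\sys{B}}$ is a feasible smoothing candidate. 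Now record Alice's POVM outcome in a classical register $\sys{M}$ of dimension $M$, producing the cq-state
\begin{equation*}
\omega_{\sys{A}'\sys{MK}}\defeq\sum_{m\in[M]}\braket{m}_\sys{M}\tensor\tr_{\sys{AK}'}\!\left[(E_m\tensor I_{\sys{A}'\sys{K}})(\braket{\rho}_{\sys{A}'\sys{A}}\tensor\braket{\sigma}_{\sys{KK}'})\right].
\end{equation*}
Summing over $m$ collapses $\sum_m E_m$ to $I_{\sys{AK}'}$, yielding the key identity $\omega_{\sys{A}'\sys{K}}=\rho_{\sys{A}'}\tensor\sigma_\sys{K}$. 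Positivity of each conditional summand then delivers the operator inequality $\omega_{\sys{A}'\sys{MK}}\leq I_\sys{M}\tensor\rho_{\sys{A}'}\tensor\sigma_\sys{K}$, equivalently $D_{\max}\infdiv*{\omega_{\sys{A}'\sys{MK}}}{\rho_{\sys{A}'}\tensor(I_\sys{M}/M)\tensor\sigma_\sys{K}}\leq\log M$. Bob's classically-controlled CPTP map $\Phi_{\sys{MK}\to\sys{B}}$ (applying $\Phi^{(m)}_{\sys{K}\to\sys{B}}$ upon reading $m$) sends this operator to $\tilde{\rho}_{\sys{A}'\sys{B}}$, so by data-processing of $D_{\max}$, setting $\sigma^\star_\sys{B}\defeq\Phi_{\sys{MK}\to\sys{B}}((I_\sys{M}/M)\tensor\sigma_\sys{K})$ yields $D_{\max}\infdiv*{\tilde{\rho}_{\sys{A}'\sys{B}}}{\rho_{\sys{A}'}\tensor\sigma^\star_\sys{B}}\leq\log M$.

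Combining the two ingredients, $D_{\max}^{\epsilon,\sys{A}'}\infdiv*{\rho_{\sys{A}'\sys{B}}}{\rho_\sys{A}\tensor\sigma^\star_\sys{B}}\leq D_{\max}\infdiv*{\tilde{\rho}_{\sys{A}'\sys{B}}}{\rho_{\sys{A}'}\tensor\sigma^\star_\sys{B}}\leq\log M$; since $\rho_\sys{A}$ was arbitrary and the same protocol $\epsilon$-simulates $\mathcal{N}_{\sys{A}\to\sys{B}}$ for every input by definition, taking an infimum over $\sigma_\sys{B}$ and a supremum over $\rho_\sys{A}$ yields~\eqref{eq:one-shot:converse}. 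The main obstacle is the operator bound $\omega_{\sys{A}'\sys{MK}}\leq I_\sys{M}\tensor\rho_{\sys{A}'}\tensor\sigma_\sys{K}$, which is the concrete manifestation of the non-lockability of max-mutual information: it encapsulates that $\log M$ bits of classical communication on top of pre-shared entanglement can enlarge $D_{\max}$-mutual information between $\sys{A}'$ and Bob's side by at most $\log M$. Every other step is a routine combination of trace-preservation, data-processing of $D_{\max}$, and the definition of the partially smoothed max-divergence.
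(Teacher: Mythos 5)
Your proof is correct and follows essentially the same route as the paper: bound $D_{\max}$ of the simulated state from a product state by $\log M$ via the structure of the protocol, then pass to the smoothed quantity using $\tilde{\rho}_{\sys{A}'}=\rho_\sys{A}$ and the fidelity guarantee. The only difference is that where the paper invokes the non-lockability of $I_{\max}$ as a cited lemma (together with $I_{\max}(\sys{A}':\sys{K})=0$ and data-processing), you inline its proof as the explicit operator inequality $\omega_{\sys{A}'\sys{MK}}\leq I_\sys{M}\tensor\rho_{\sys{A}'}\tensor\sigma_\sys{K}$, which makes the argument more self-contained and directly exhibits the witness $\sigma^\star_\sys{B}$.
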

\begin{proof}
Suppose we have an eLOCC protocol with alphabet size $M$ that simulates the channel $\mathcal{N}_{\sys{A}\to\sys{B}}$.
Let $\rv{M}$ denote the random variable representing the classical message (see Fig.~\ref{fig:channel:simulation}).
Starting from the picture of the systems right after the classical message $\rv{M}$ is gererated, we have the following chain of ineqalities
\begin{align}
\log{M} & = I_{\max}(\sys{A}':\sys{K}) + \log{M} \nonumber \\
&\geq I_{\max}(\sys{A}':\rv{M}\sys{K}) \label{eq:one-shot:converse:1} \\
&\geq I_{\max}(\sys{A}':\sys{B})_{\tilde{\rho}_{\sys{A}'\sys{B}}}, \label{eq:one-shot:converse:2}
\end{align}
where we used non-lockability of $I_{\max}$ (see~\cite[Cor.~A.14]{berta2013quantum}) in~\eqref{eq:one-shot:converse:1}, and the data-processing inequality of $I_{\max}$ in~\eqref{eq:one-shot:converse:2}, and we denote the density operator for systems $\sys{A}'\sys{B}$ at the end of the protocol by $\tilde{\rho}_{\sys{A}'\sys{B}}$.
Using the definition of $I_{\max}$, we have
\begin{align*}
I_{\max}(\sys{A}':\sys{B})_{\tilde{\rho}_{\sys{A}'\sys{B}}}  &= \inf_{\sigma_\sys{B}\in\set{D}(\hilbert_\sys{B})} D_{\max}\infdiv*{\tilde{\rho}_{\sys{A}'\sys{B}}}{\rho_{\sys{A}'}\tensor \sigma_\sys{B}} \\
&\geq \inf_{\sigma_\sys{B}\in\set{D}(\hilbert_\sys{B})} D_{\max}^{\epsilon,\sys{A}'}\infdiv*{\rho_{\sys{A}'\sys{B}}}{\rho_{\sys{A}'}\tensor\sigma_\sys{B}},
\end{align*}
where the last inequality is due to the hypothesis that the fidelity between $\tilde{\rho}_{\sys{A}'\sys{B}}$ and $\rho_{\sys{A}'\sys{B}}$ is at least $1-\epsilon^2$.
Combining the above, we know
\begin{equation*}
\log{M} \geq \inf_{\sigma_\sys{B}\in\set{D}(\hilbert_\sys{B})} D_{\max}^{\epsilon,\sys{A}'}\infdiv*{\rho_{\sys{A}'\sys{B}}}{\rho_{\sys{A}'}\tensor\sigma_\sys{B}}
\end{equation*}
for \emph{any} $\rho_{\sys{A}'}\in\set{D}(\hilbert_{\sys{A}'})$, which finishes the proof.
\end{proof}

\section{First-Order Analysis}
We now turn our attention to the asymptotic analysis of~\eqref{eq:one-shot:achivability} and~\eqref{eq:one-shot:converse}, \ie, the problem of simulating $n$ copies of the channel $\mathcal{N}_{\sys{A}\to\sys{B}}$.
Note that this problem has already been solved as the quantum reverse Shannon theorem~\cite{bennett2002entanglement, bennett2014quantum}, and we are merely recovering the result in a much simpler way.

For a fixed $\epsilon\in(0,1)$, let $M_\epsilon^\star(\mathcal{N}_{\sys{A}\to\sys{B}})$ denote the smallest alphabet size such that an eLOCC protocol can simulate $\mathcal{N}_{\sys{A}\to\sys{B}}$ with fidelity at least $1-\epsilon^2$.
We consider the asymptotics of the achievability bound first.
Starting by applying Proposition~\ref{prop:one-shot:achivability} on $n$ copies of $\mathcal{N}_{\sys{A}\to\sys{B}}$, we have 
\begin{align}
&\frac{1}{n}\log{M_\epsilon^\star(\mathcal{N}_{\sys{A}\to\sys{B}}^{\tensor n})}
 \leq \frac{1}{n}\adjustlimits \sup_{\rho_{{\syss{A}'}_1^n}\in\set{D}(\hilbert_{{\syss{A}'}_1^n})} \inf_{\sigma_{\syss{B}_1^n}\in\set{D}(\hilbert_{\syss{B}_1^n})} D_{\max}^{\epsilon-\delta, {\syss{A}'}_1^n}\infdiv*{\rho_{{\syss{A}'}_1^n\syss{B}_1^n}}{\rho_{{\syss{A}'}_1^n}\tensor\sigma_{\syss{B}_1^n}} -\frac{1}{n}\log{\delta^2} \nonumber \\
\label{eq:asymptotic:achivability:1}
&\hspace{25pt} \leq \frac{1}{n}\adjustlimits \sup_{\rho_{{\syss{A}'}_1^n}\in\set{D}(\hilbert_{{\syss{A}'}_1^n})} \inf_{\sigma_{\syss{B}_1^n}\in\set{D}(\hilbert_{\syss{B}_1^n})} D_{\max}^{\frac{\epsilon-\delta-\delta'}{2}}\infdiv*{\rho_{{\syss{A}'}_1^n\syss{B}_1^n}}{\rho_{{\syss{A}'}_1^n}\tensor\sigma_{\syss{B}_1^n}} + \frac{1}{n}\log{\frac{8+\delta'^2}{\delta'^2}} -\frac{1}{n}\log{\delta^2} \\
\label{eq:asymptotic:achivability:2}
&\hspace{25pt}  = \frac{1}{n}\adjustlimits \sup_{\rho_{{\syss{A}'}_1^n}\in\set{D}(\hilbert_{{\syss{A}'}_1^n})} \inf_{\sigma_{\syss{B}_1^n}\in\set{D}(\hilbert_{\syss{B}_1^n})} D_{\max}^{\epsilon/4}\infdiv*{\rho_{{\syss{A}'}_1^n\syss{B}_1^n}}{\rho_{{\syss{A}'}_1^n}\tensor\sigma_{\syss{B}_1^n}} + \frac{1}{n}\log{\frac{128+\epsilon^2}{\epsilon^2}} -\frac{1}{n}\log{\frac{\epsilon^2}{16}} \\
\label{eq:asymptotic:achivability:3}
&\hspace{25pt} \leq \frac{1}{n} \underbrace{\adjustlimits \sup_{\rho_{{\syss{A}'}_1^n}\in\set{D}(\hilbert_{{\syss{A}'}_1^n})} \inf_{\sigma_{\syss{B}_1^n}\in\set{D}(\hilbert_{\syss{B}_1^n})} \tilde{D}_{\alpha}\infdiv*{\rho_{{\syss{A}'}_1^n\syss{B}_1^n}}{\rho_{{\syss{A}'}_1^n}\tensor\sigma_{\syss{B}_1^n}}}_{\defas \tilde{I}_\alpha(\mathcal{N}_{\sys{A}\to\sys{B}}^{\tensor n})} + \underbrace{\frac{1}{n} \frac{-\log\left(1\!-\!\sqrt{1\!-\!\frac{\epsilon^2}{16}}\right)}{\alpha-1} + \frac{1}{n}\log{\frac{128+\epsilon^2}{\epsilon^2}} -\frac{1}{n}\log{\frac{\epsilon^2}{16}}}_{\to 0\text{ as } n\to\infty},\!\!
\end{align}
where in~\eqref{eq:asymptotic:achivability:1} we use~\cite[Theorem~11]{anshu2020partially},
in~\eqref{eq:asymptotic:achivability:2} we substitute $\delta,\delta'\gets\epsilon/4$,
in~\eqref{eq:asymptotic:achivability:3} we use~\cite[Proposition~6.5]{tomamichel2015quantum}.
Note that the sandwiched Rényi relative entropy is defined as
\begin{equation*}
\tilde{D}_\alpha\infdiv*{\rho}{\sigma} \defeq \frac{1}{\alpha-1} \log{\tr\left(\sigma^{\frac{1-\alpha}{2\alpha}}\rho\sigma^{\frac{1-\alpha}{2\alpha}}\right)^\alpha}.
\end{equation*}
The first part of~\eqref{eq:asymptotic:achivability:3} is the sandwiched Rényi mutual information of the channel $\mathcal{N}_{\sys{A}\to\sys{B}}^{\tensor n}$ which is known to be additive~\cite[Lemma~6]{gupta2015multiplicativity}; whereas the second part tends to zero as $n$ tends to infinity for any fixed $\alpha>1$ and $\epsilon\in(0,1)$.
Thus, for all $\epsilon\in(0,1)$, 
\begin{align}
\limsup_{n\to\infty} \frac{1}{n}\log{M_\epsilon^\star(\mathcal{N}_{\sys{A}\to\sys{B}}^{\tensor n})}
& \leq \inf_{\alpha>1} \tilde{I}_\alpha(\mathcal{N}_{\sys{A}\to\sys{B}})
= \multiadjustlimits{
    \inf_{\alpha>1}, 
    \sup_{\rho_{\sys{A}'}\in\set{D}(\hilbert_{\sys{A}'})},
    \inf_{\sigma_\sys{B}\in\set{D}(\hilbert_\sys{B})} }\ \tilde{D}_\alpha\infdiv*{\rho_{\sys{A}'\sys{B}}}{\rho_{\sys{A}'}\tensor\sigma_\sys{B}}
 \nonumber \\
& \leq \multiadjustlimits{
    \inf_{\alpha>1}, 
    \sup_{\rho_{\sys{A}'}\in\set{D}(\hilbert_{\sys{A}'})},
    \inf_{\sigma_\sys{B}\in\set{D}(\hilbert_\sys{B})} }\ \left\{ D\infdiv*{\rho_{\sys{A}'\sys{B}}}{\rho_{\sys{A}'}\tensor\sigma_\sys{B}} + 4(\alpha-1)(\log{v})^2 \right\}
    \label{eq:Dalpha:uniform:bound} \\
& = \sup_{\rho_{\sys{A}'}\in\set{D}(\hilbert_{\sys{A}'})} I(\sys{A}':\sys{B})_{\rho_{\sys{A}'\sys{B}}}, \nonumber
\end{align}
where we use \cite[Lemma~6.3]{tomamichel2012framework} in~\eqref{eq:Dalpha:uniform:bound}, and $v$ is some constant for a given fixed channel $\mathcal{N}_{\sys{A}\to\sys{B}}$.

On the other hand, the asymptotics for the converse bound is relatively straightforward. 
By restricting the supreme over all input density operators $\rho_{{\syss{A}'}_1^n}$ to product states, we have
\begin{align*}
\frac{1}{n}\log{M_\epsilon^\star(\mathcal{N}_{\sys{A}\to\sys{B}}^{\tensor n})} &\geq \frac{1}{n} \adjustlimits \sup_{\rho_{{\syss{A}'}_1^n}\in\set{D}(\hilbert_{{\syss{A}'}_1^n})} \inf_{\sigma_{\syss{B}_1^n}\in\set{D}(\hilbert_{\syss{B}_1^n})} D_{\max}^{\epsilon,{\syss{A}'}_1^n}\infdiv*{\rho_{{\syss{A}'}_1^n\syss{B}_1^n}}{\rho_{{\syss{A}'}_1^n}\tensor\sigma_{\syss{B}_1^n}} \\
&\geq \frac{1}{n} \adjustlimits \sup_{\rho_{\sys{A}'}\in\set{D}(\hilbert_{\sys{A}'})} \inf_{\sigma_{\syss{B}_1^n}\in\set{D}(\hilbert_{\syss{B}_1^n})}  D_{\max}^{\epsilon,{\syss{A}'}_1^n}\infdiv*{\rho_{\sys{A}'\sys{B}}^{\tensor n}}{\rho_{\sys{A}'}^{\tensor n}\tensor\sigma_{\syss{B}_1^n}} \\
&\defas \sup_{\rho_{\sys{A}'}\in\set{D}(\hilbert_{\sys{A}'})} \frac{1}{n}  I_{\max}^{\epsilon}(\dot{{\syss{A}'}_1^n} : \syss{B}_1^n)_{\rho_{\sys{A}'\sys{B}}^{\tensor n}} \tos{$n\to\infty$}{\xrightarrow{\hspace*{2cm}}} \sup_{\rho_{\sys{A}'}\in\set{D}(\hilbert_{\sys{A}'})} I(\sys{A}':\sys{B})_{\rho_{\sys{A}'\sys{B}}},
\end{align*}
where in the last step above, we used the definition of the partial smoothed max-information (see~\cite[Eq.~(11)]{anshu2020partially}) and its asymptotic equipartition property (see~\cite[Eq.~(107)]{anshu2020partially}, also see~\cite[Theorem~6.3]{tomamichel2015quantum}).

Summarizing the above discussion, we have the following theorem.
\begin{theorem}
Let $\mathcal{N}_{\sys{A}\to\sys{B}}$ be a finite-dimensional quantum channel.
For each $\epsilon\in(0,1)$, let $M_\epsilon^\star(\mathcal{N}_{\sys{A}\to\sys{B}})$ denote the smallest alphabet size $M$ such that there exists an $M$-alphabet-size eLOCC protocol that simulates $\mathcal{N}_{\sys{A}\to\sys{B}}$ with fidelity at least $1-\epsilon^2$.
It holds for any $\epsilon\in(0,1)$ that
\begin{equation*}
\lim_{n\to\infty}\frac{1}{n} \log{M_\epsilon^\star}(\mathcal{N}_{\sys{A}\to\sys{B}}^{\tensor n}) = \sup_{\rho_{\sys{A}'}\in\set{D}(\hilbert_{\sys{A}'})} I(\sys{A}':\sys{B})_{\rho_{\sys{A}'\sys{B}}}
\defas C_\fnc{E}(\mathcal{N}_{\sys{A}\to\sys{B}}),
\end{equation*}
where $\rho_{\sys{A}'\sys{B}} \defeq \ID_{\sys{A}'} \tensor \mathcal{N}_{\sys{A}\to\sys{B}}(\braket{\rho}_{\sys{A}'\sys{A}})$, and $\braket{\rho}_{\sys{A}'\sys{A}}\defeq (\sqrt{\rho_{\sys{A}'}}\tensor I_\sys{A}) \braket{\gamma} (\sqrt{\rho_{\sys{A}'}}\tensor I_\sys{A})$.
\end{theorem}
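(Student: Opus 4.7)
The plan is to establish the stated equality as two matching bounds, both derived from the one-shot results in Propositions~\ref{prop:one-shot:achivability} and~\ref{prop:one-shot:converse} applied to the $n$-fold channel $\mathcal{N}_{\sys{A}\to\sys{B}}^{\tensor n}$. The asymptotic analysis is essentially executed in the displays preceding the theorem statement, so the proof amounts to packaging those inequalities as a lim-sup bound and a lim-inf bound and invoking additivity of the sandwiched R\'enyi mutual information.

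For the achievability direction, $\limsup_n \frac{1}{n}\log M_\epsilon^\star(\mathcal{N}^{\tensor n}) \leq C_\fnc{E}(\mathcal{N}_{\sys{A}\to\sys{B}})$, I would start from Proposition~\ref{prop:one-shot:achivability} and chain three smoothing/relaxation steps. First, convert the partial smoothed max-divergence $D_{\max}^{\epsilon-\delta,{\syss{A}'}_1^n}$ to the ordinary smoothed max-divergence $D_{\max}^{(\epsilon-\delta-\delta')/2}$ via~\cite[Theorem~11]{anshu2020partially}, absorbing an additive $O(1)$ correction. Second, pass from the smoothed max-divergence to a sandwiched R\'enyi divergence $\tilde D_\alpha$ via~\cite[Proposition~6.5]{tomamichel2015quantum}, with another additive correction that is $O(1)$ in $n$. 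Third, invoke the additivity of the sandwiched R\'enyi mutual information~\cite[Lemma~6]{gupta2015multiplicativity} so that $\tilde I_\alpha(\mathcal{N}^{\tensor n})/n = \tilde I_\alpha(\mathcal{N})$. Taking $n\to\infty$ eliminates the $O(1/n)$ correction terms, and then letting $\alpha\downarrow 1$ via the first-order expansion in~\cite[Lemma~6.3]{tomamichel2012framework} recovers the supremum of $I(\sys{A}':\sys{B})_{\rho_{\sys{A}'\sys{B}}}$.

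For the converse direction, $\liminf_n \frac{1}{n}\log M_\epsilon^\star(\mathcal{N}^{\tensor n}) \geq C_\fnc{E}(\mathcal{N}_{\sys{A}\to\sys{B}})$, I would apply Proposition~\ref{prop:one-shot:converse} to $\mathcal{N}^{\tensor n}$ and restrict the outer supremum over $\rho_{\syss{A}_1^n}$ to i.i.d.\ product inputs $\rho_\sys{A}^{\tensor n}$, which can only decrease the bound. For each fixed $\rho_\sys{A}$, the resulting normalised quantity is the partial smoothed max-information of the tensor state $\rho_{\sys{A}'\sys{B}}^{\tensor n}$ divided by $n$; the asymptotic equipartition property for partial smoothed max-information~\cite[Eq.~(107)]{anshu2020partially} shows that this converges to the mutual information $I(\sys{A}':\sys{B})_{\rho_{\sys{A}'\sys{B}}}$. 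Taking the supremum over $\rho_\sys{A}$ yields the matching lower bound.

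The main subtlety lies in the achievability direction: one must check that the three successive smoothing/relaxation steps together with the subsequent $\inf_{\alpha>1}$ can be chained so that every error term is genuinely $o(1)$ in $n$ while the principal term remains additive under tensor powers. This is exactly the role played by the additivity lemma for $\tilde I_\alpha$ and by the uniform-in-$\alpha$ continuity estimate \cite[Lemma~6.3]{tomamichel2012framework}; once these are in hand the two bounds meet and the theorem follows.
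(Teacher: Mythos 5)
Your proposal is correct and follows essentially the same route as the paper: applying Proposition~\ref{prop:one-shot:achivability} to $\mathcal{N}^{\tensor n}$, chaining the conversion to ordinary smoothed max-divergence (\cite[Theorem~11]{anshu2020partially}), the relaxation to sandwiched R\'enyi divergence (\cite[Proposition~6.5]{tomamichel2015quantum}), additivity of $\tilde I_\alpha$ (\cite[Lemma~6]{gupta2015multiplicativity}), and the $\alpha\downarrow 1$ limit via \cite[Lemma~6.3]{tomamichel2012framework} for achievability; then restricting Proposition~\ref{prop:one-shot:converse} to i.i.d.\ inputs and invoking the AEP for the partial smoothed max-information (\cite[Eq.~(107)]{anshu2020partially}) for the converse. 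The only cosmetic difference is that the paper specializes $\delta=\delta'=\epsilon/4$ to make the $o(1)$ bookkeeping explicit, which you leave implicit.
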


\section*{Acknowledgment}
This research is supported by the National Research Foundation, Prime Minister's Office, Singapore and the Ministry of Education, Singapore under the Research Centres of Excellence programme.
MC and MT are also supported by NUS startup grants (A-0009028-02-00).
RJ is also supported by the NRF grant NRF2021-QEP2-02-P05.
This work was done in part while RJ was visiting the
Technion-Israel Institute of Technology, Haifa, Israel, and the Simons Institute for the Theory of Computing, Berkeley, CA, USA.

\bibliographystyle{IEEEtran}
\bibliography{reference}
\appendices
\section{An Alternative Proof to Lemma~\ref{lemma:f:convex}}
\label{app:proof:lemma:f:convex}
\begin{proof}
Let $J$ and $\tilde{J}$ be the Choi-Jamiolkowski state of the channel $\set{N}_{\sys{A}\to\sys{B}}$ and $\tilde{\set{N}}_{\sys{A}\to\sys{B}}$, respectively, \ie,
\begin{equation*}
    J \defeq \ID_{\sys{A}'}\tensor\mathcal{N}_{\sys{A}\to\sys{B}} \left(\braket{\gamma}_{\sys{A}'\sys{A}}\right), \quad
    \tilde{J} \defeq \ID_{\sys{A}'}\tensor\tilde{\mathcal{N}}_{\sys{A}\to\sys{B}} \left(\braket{\gamma}_{\sys{A}'\sys{A}}\right). 
\end{equation*}
By writing (see~\cite[Eq.~(19)]{ouyang2023approximate})
\begin{align*}
\ID_{\sys{A}'}\tensor\mathcal{N}_{\sys{A}\to\sys{B}}(\braket{\rho}_{\sys{A}'\sys{A}}) &= (\sqrt{\rho}\tensor I) \cdot J \cdot (\sqrt{\rho}\tensor I),\\
\ID_{\sys{A}'}\tensor\tilde{\mathcal{N}}_{\sys{A}\to\sys{B}}(\braket{\rho}_{\sys{A}'\sys{A}}) &= (\sqrt{\rho}\tensor I) \cdot \tilde{J} \cdot (\sqrt{\rho}\tensor I),
\end{align*}
we can rewrite~\eqref{eq:channel:distance} as
\begin{align}
f:(\tilde{\mathcal{N}}_{\sys{A}\to\sys{B}},\rho_\sys{A})&\mapsto\begin{aligned}[t]
\max\ & \frac{1}{2}\tr\left(Z+Z^\dagger\right) \\
\fnc{s.t.}\ & \left(\begin{smallmatrix}(\sqrt{\rho}\tensor I) \cdot J \cdot (\sqrt{\rho}\tensor I) & Z \\ Z^\dagger & (\sqrt{\rho}\tensor I) \cdot \tilde{J} \cdot (\sqrt{\rho}\tensor I)\end{smallmatrix}\right) \geq 0
\end{aligned}\nonumber \\
&= \begin{aligned}[t]
\max\ & \frac{1}{2}\tr\left(Z+Z^\dagger\right) \\
\fnc{s.t.}\ & (\sqrt{\rho}\tensor I) \cdot J \cdot (\sqrt{\rho}\tensor I) \geq Z\cdot (\sqrt{\rho}\tensor I)^{-1} \cdot \tilde{J}^{-1} \cdot (\sqrt{\rho}\tensor I)^{-1} \cdot Z^\dagger
\end{aligned} \nonumber \\
&= \begin{aligned}[t]
\max\ & \frac{1}{2}\tr\left(Z+Z^\dagger\right) \\
\fnc{s.t.}\ &  J \geq (\sqrt{\rho}\tensor I)^{-1} \cdot Z\cdot (\sqrt{\rho}\tensor I)^{-1} \cdot \tilde{J}^{-1} \cdot (\sqrt{\rho}\tensor I)^{-1} \cdot Z^\dagger \cdot (\sqrt{\rho}\tensor I)^{-1}
\end{aligned} \nonumber \\
\label{eq:F:rewrite}
&= \begin{aligned}[t]
\max\ & \frac{1}{2}\tr\left((\rho\tensor I)\cdot (\tilde{Z}+\tilde{Z}^\dagger)\right) \\
\fnc{s.t.}\ & J \geq \tilde{Z} \cdot \tilde{J}^{-1} \cdot \tilde{Z}^\dagger
\end{aligned} ,
\end{align}
where we substitute $\tilde{Z}=(\sqrt{\rho}\tensor I)^{-1} \cdot Z\cdot (\sqrt{\rho}\tensor I)^{-1}$ in the last step.
Note that~\eqref{eq:F:rewrite} is a maximization over linear functions of $\rho$, and therefore much be convex in $\rho$.    
\end{proof}
\end{document}